\documentclass[12pt]{article}
\pdfoutput=1
\usepackage[nosort]{cite}

\usepackage{epsfig}
\usepackage{amsfonts}
\usepackage{amscd}
\usepackage{latexsym}
\usepackage{amsmath,amssymb}
\usepackage{verbatim}
\usepackage{setspace}
\usepackage[dvipsnames]{xcolor}
\usepackage{fancyhdr}
\usepackage{cite}

\usepackage{tikz}
\usepackage{tikz-cd}
\usepackage{tikzit}

\tikzstyle{pointoperator}=[fill=black, draw=none, shape=circle, minimum size=.1 cm, inner sep=0 pt]

\tikzstyle{dashedline}=[dashed, -, thick]
\tikzstyle{blueline}=[-, draw=blue, thick]
\tikzstyle{arrowline}=[<-, thick]
\tikzstyle{normalline}=[-, thick, draw={rgb,255: red,171; green,171; blue,171}]
\tikzstyle{fillline}=[-, thick, fill=red, fill opacity=.5]
\tikzstyle{bluefillline}=[-, fill opacity=.5, fill=blue, thick]
\tikzstyle{purplefillline}=[-, fill opacity=.5, fill=purple, thick]
\tikzstyle{bluedashedline}=[-, draw=blue, thick, dashed]

\usepackage{slashed}
\usepackage{multirow}
\usetikzlibrary{calc}
\usetikzlibrary{topaths}
\usetikzlibrary{decorations}
\usetikzlibrary{decorations.pathmorphing}
\usetikzlibrary{arrows,decorations.markings,cd}
\usetikzlibrary{calc,arrows,cd,decorations.markings,snakes}
\tikzset{
->-/.style args={#1rotate#2}{decoration={markings, mark=at position #1 with {\arrow[scale=1.5,rotate = #2 ]{stealth}}}, postaction={decorate}}
}
\usetikzlibrary{shapes.geometric}
\usetikzlibrary{knots}
\usepackage{tikz-3dplot}

\tikzset{snake it/.style={decorate, decoration=snake}}

\usepackage{draft}

\usepackage{graphicx,color,subcaption}
\usepackage{cite}
\usepackage{mciteplus}
\usepackage{skak}
\usepackage{bbm}
\usepackage[english]{babel}
\usepackage{amsthm}
\usepackage{soul}

\numberwithin{equation}{section}

\newtheorem{lemma}{Lemma}

\tikzset{
    mid arrow/.style={
        postaction={
            decorate,
            decoration={
                markings,
                mark=at position 0.6 with {\arrow[scale=1.0]{stealth}}
            }
        }
    },
    inline tikz/.style={
        baseline={([yshift=-.5ex]current bounding box.center)}
    }
}

\usepackage{hyperref}
\hypersetup{
    colorlinks=true,
    linkcolor=Maroon,
    citecolor=MidnightBlue,
    urlcolor=MidnightBlue,
    linktoc=page
}

\begin{document}

\begin{titlepage}

\title{Gauss Law, Monodromy Defect, Magnetic Lattice Translation, and Lattice Duality}

\author{Pengcheng Wei$^1$ and Yunqin Zheng$^{2}$}

        \address{${}^{1}$ Department of Physics, Nanjing University, Nanjing, Jiangsu 210093, China\\\bigskip 
        ${}^{2}$  Kavli Institute for Theoretical Sciences and School of Quantum, \\University of Chinese Academy of Sciences, Beijing, 100190, China}

\abstract
In a Hamiltonian lattice gauge theory, the physical Hilbert space is fixed by the Gauss law, by choosing an eigenspace of the Gauss law operator which generates the gauge transformation. Choosing a non-trivial eigenspace is referred to as the modified Gauss law. In 2+1d dynamical $\mathrm{U(1)}$ pure gauge theory, we show that these physical Hilbert spaces are defect Hilbert spaces associated with non-topological monodromy defects for monopole operators and topological defects for $\mathrm{U(1)}$ 1-form symmetry, inserted along the time direction. We also construct the corresponding lattice translation operators. Under a nontrivially modified Gauss law, these translations obey a magnetic translation algebra. Finally, we discuss how the modified Gauss laws are manifested in 2+1d compact bosons, via exact duality on the lattice.

\end{titlepage}

\eject

\setcounter{tocdepth}{3}

\tableofcontents

\section{Introduction}

In a quantum field theory or lattice model with flat or dynamical gauge fields, physical observables must be gauge invariant. In a Hamiltonian lattice model, one typically begins with an extended tensor-product Hilbert space $\mathcal{H}$. Gauge transformations acting on $\mathcal{H}$ are generated by the Gauss law operator $G[\zeta(x)]$, where $\zeta(x)$ is a spatially dependent gauge parameter. The physical Hilbert space is defined as the subspace of $\mathcal{H}$ satisfying
\begin{equation}\label{eq:intro the Gauss law}
    G[\zeta(x)] = 1 \quad \text{for all } \zeta(x).
\end{equation}
The constraint \eqref{eq:intro the Gauss law} is referred to as the Gauss law.

In fact, the notion of Gauss laws is suggested to be generalized to any constraint imposed on the Hilbert space, not necessarily associated with the gauge group as an analog from the continuum. See more discussions in footnote \ref{fn:gauss laws and constraints}.

However, we are not strictly required to define the physical Hilbert space as the eigenspace of $G[\zeta(x)]$ with eigenvalue $1$. Instead, we can choose a more general eigenspace governed by a modified Gauss law:
\begin{equation}\label{eq:intro the modified Gauss law}
    G[\zeta(x)] = \chi[\zeta(x)],
\end{equation}
where $\chi$ takes values in $\mathrm{U}(1)$ since $G$ is unitary. The modified Gauss law \eqref{eq:intro the modified Gauss law} defines a distinct physical Hilbert space, denoted by $\mathcal{H}_\chi$.

A natural question then arises: \emph{what is the interpretation of $\mathcal{H}_\chi$ for a nontrivial $\chi\neq 1$, and how are the spaces $\mathcal{H}_\chi$'s related for inequivalent $\chi$'s?}

For flat gauge fields of finite groups, this question has been studied in the context of discrete gauging \cite{Cheng:2022sgb,Seifnashri:2023dpa,Seiberg:2023cdc,Choi:2024rjm,Seifnashri:2026ema}. The answer is that the modification of Gauss law at one site\footnote{Here we consider $1$-form gauge field. The stories are the same for higher form gauge fields.} corresponds to inserting a Wilson line along the time direction at this site. The Wilson line is topological since the gauge field is flat. 

The answer can be seen as follows. For a flat $1$-form gauge field, we only need a pair conjugate variables on each links. If we denote the local matter charge as $Q_s$ and the conjugate variable to $A_\ell$ as $E_\ell$, than the Gauss law operator is
\begin{align}\label{eq:simple case Gauss law}
    G(\zeta)=\exp\left(i\sum_s \zeta_s (Q+\delta E)_s\right),
\end{align}
for $\zeta_s$ being an element in the finite gauge group. See Appendix \ref{sec:convensions} for the definition of $\delta$.  The Wilson lines generate a $(D-2)$-form symmetry \footnote{i.e. the quantum symmetry in the context of discrete gauging.} \cite{Kapustin:2014gua,Gaiotto:2014kfa,Cordova:2022ruw,Schafer-Nameki:2023jdn,Costa:2024wks,Kaidi:2026urc}, where $D$ is the spacetime dimension. Inserting a Wilson line $\exp(ik\int A)$ at site $s_*$ is done by conjugating the system (i.e. Gauss law and Hamiltonian) by the unitary 
\begin{align}
    \exp\left(ik\sum_{\ell\in L} A_\ell\right),
\end{align}
where $L$ is a half-line ending on $s_*$. This shift $G$ by a phase only depends on $\zeta_{s_*}$,
\begin{align}
    G(\zeta)\to e^{ik\zeta_{s_*}}G(\zeta).
\end{align}
It can be checked that starting with $\mathcal{H}_1$  defined by \eqref{eq:intro the Gauss law}, one can obtain $\mathcal{H}_\chi$ for any nontrivial $\chi$ by inserting appropriate Wilson lines at each site. Equivalently, the physical Hilbert space for a modified Gauss law is the defect Hilbert space, twisted by some Wilson lines.

For a dynamical $\mathrm{U}(1)$ gauge field \cite{Cheng:2022sgb,Seifnashri:2026ema,Lu:2026itw,Lu:2026jnq,Chatterjee:2024gje,Witten:2003ya}, because of the non-flat field strength, we can have a monopole configuration. To realize the monopole operator and the monopole number operator on the lattice, we need to introduce the Villain field $N$, which is a $2$-from $\mathbb{Z}$ gauge field. One way to think about this is that we start with a $\mathbb{R}$ gauge field and than flat gauge the $\mathbb{Z}^{(1)}$ symmetry \cite{Gorantla:2021svj,Villain:1974ir,Sulejmanpasic:2019ytl,Cordova:2019jnf,Cordova:2019uob,Yoneda:2022qpj,Fazza:2022fss,Chen:2024ddr,Xu:2024hyo,Jacobson:2024hov,Cherman:2024exo}. Or we just view $N$ as one of the differential cohomology defining data for a $\mathrm{U(1)}$ gauge field in its own right \cite{cheeger1985differential,hopkins2005quadratic,Freed:2006yc,Aharony:2016kai,Hsieh:2020jpj}. See \cite{Hsieh:2020jpj} for a pedagogical introduction, and the $N$ and $A$ in our notation is almost the discrete version of theirs.

Now, with more gauge parameters and more subtle gauge transformations, the Gauss law operator takes a more complicated form than \eqref{eq:simple case Gauss law}. Without the previous $(D-2)$-from symmetry since the Wilson lines are \emph{not} topological, one might expect the $\mathrm{U(1)}^{(D-3)}$ magnetic symmetry defect will save the day. But as we show in Section \ref{sec:Moduli space and dynamical defects}, these topological defect insertions do \emph{not} generate all the modified Gauss laws, i.e. nontrivial $\chi$'s.

\paragraph{Main results}

In this work we focus on $\mathrm{U}(1)$ pure gauge theory in
$2+1$d.  We show that the consistent modified Gauss laws are classified by real lattice 1-cochain $\lambda\in C^1(\Lambda,\mathbb{R})$ modulo closed
$2\pi\mathbb{Z}$-valued 1-cochains and a $2\pi$ periodic 2-cochian $\theta\in C^2(\Lambda,\mathbb{R}/2\pi\mathbb{Z})$.  More explicitly, the modified  local Gauss laws are
\begin{equation}
    (\delta E)_s=\frac{(\delta\lambda)_s}{2\pi},
    \qquad
    e^{i(2\pi E+\delta B)_\ell}=e^{i\lambda_\ell}, \quad e^{2\pi iN_p}=e^{i\theta_p}
\end{equation}

We then identify the physical meaning of these sectors. A non-zero $\lambda$ corresponds to inserting \emph{open} magnetic surface defects
\begin{equation}
    \exp\left(i\lambda\int_\Sigma\frac{F}{2\pi}\right),
    \qquad \partial\Sigma\neq0.
\end{equation}
The boundary of such an open surface is a monodromy defect for monopole
operators, also referred to as the twist defect.  On the lattice, the monodromy at a site $s$ is
$e^{i(\delta\lambda)_s}$. And a non-zero $\theta$ corresponds to electric $\mathrm{U(1)}_e^{(1)}$ topological defects, or a vortex defect when the electric $\mathrm{U(1)}_e^{(1)}$ symmetry is explicitly broken by some matter fields.

Then, we construct the translation operators acting on these defect
Hilbert spaces $\mathcal{H}_\chi$. These translations obey
\begin{equation}
    T_1T_2T_1^{-1}T_2^{-1}
    =
    \exp\left[
    -i\sum_s(\delta\lambda)_s
    \left(N+\frac{dA}{2\pi}\right)_{s+\frac{\hat x}{2}+\frac{\hat y}{2}}
    \right],
\end{equation}
which is a magnetic-translation algebra.  The noncommutativity of
translations is therefore the manifestation of the monodromy defects.

Finally we demonstrate the exact duality to compact boson as a Hamiltonian lattice model, and compare the modified Gauss law on both sides.

The paper is organized as follows. In Section \ref{sec:2+1d dynamical U(1) gauge theory}, we explain the modified Gauss law in dynamical $\mathrm{U(1)}$ gauge theory, and work out the moduli space of physical Hilbert spaces. Section \ref{sec:Moduli space and dynamical defects} shows that a non-trivially modified Gauss law describes the Hilbert space with dynamical monodromy defect insertions. In Section \ref{sec:magnetic translation}, the translation operation on each Hilbert space is constructed, with the symmetry algebra agreeing with the monodromy. Section \ref{sec:duality} is about the exact lattice model duality and some discussions about matter are made in Section \ref{sec:Discussions and future directions}.

Our conventions and notations are collected in Appendix \ref{sec:convensions}, a proof is given in Appendix \ref{sec:proof of lambda for chi}, and the comparison with the Hamiltonian lattice gauge theory of a finite group is made in Appendix \ref{app:finite}.

\section{2+1d dynamical $\mathrm{U(1)}$ gauge theory}\label{sec:2+1d dynamical U(1) gauge theory}

\subsection{Modified Gauss law}

We explain the modification of Gauss law in a Hamiltonian lattice model, with only $\mathrm{U(1)}$ gauge field, and without matter degrees of freedom.

The spatial lattice is a 2d square lattice with periodic boundary conditions along both $x$ and $y$ directions. We denote $\Lambda$ for the lattice, $s$ for sites, $\ell$ for links and $p$ for plaquettes. 

We recall the Hilbert space for dynamical $\mathrm{U(1)}$ gauge fields on the lattice as follows. The idea follows from Villain \cite{Villain:1974ir} where one represents a $\mathrm{U(1)}$ gauge field by an $\mathbb{R}$ gauge field and a $\bZ$ gauge field, see also recent discussions \cite{Sulejmanpasic:2019ytl,Fazza:2022fss,Lu:2026jnq,Chen:2024ddr,Cordova:2019jnf,Cordova:2019uob}. On each link we place a pair of $\mathbb{R}$-valued conjugate variables,
\begin{align}\label{eq:abelian gauge theory commutator 1}
    [A_\ell, E_{\ell'}]=i\delta_{\ell\ell'},
\end{align}
and the local Hilbert space on each link forms a representation of this algebra $\mathcal{H}_\ell \simeq L^2(\mathbb{R})$. On each plaquette we place a pair of $\mathbb{R}$-valued conjugate variables,
\begin{align}\label{eq:abelian gauge theory commutator 2}
    [B_p, N_{p'}]=i\delta_{pp'},
\end{align}
and the local Hilbert space on each plaquette forms a representation of this algebra $\mathcal{H}_p \simeq L^2(\mathbb{R})$. We refer to the tensor product Hilbert space $\mathcal{H}=\otimes_p \mathcal{H}_p\otimes_\ell \mathcal{H}_\ell$ as the extended Hilbert space. 

The tensor product Hilbert space is subjected to two types of Gauss laws. The first Gauss law enforces the eigenvalue of $N_p$ to be integers by demanding 
\begin{equation}\label{eq:gausslaw1}
    \exp\left( 2\pi i \widetilde{n}_p N_p\right)
\end{equation}
to be 1 for every plaquette, for an arbitrary 
\begin{equation}
    \widetilde{n}\in C^2(\Lambda,\mathbb{Z})\,.
\end{equation}
Later, we allow more general Gauss laws by allowing \eqref{eq:gausslaw1} to be an arbitrary constant phase, not necessarily 1. This is why we start by setting $N_p$ to be $\mathbb{R}$ valued instead of $\bZ$ valued. The Gauss law of $N_p$ also generates a gauge transformation of its conjugate variable $B_p$ 
\begin{align}\label{eq:U(1) gauge transformation 2}
    B_p\to B_p+2\pi \widetilde{n}_p\,. 
\end{align}
This Gauss law does not break the tensor product structure of $\mathcal{H}$. 

To see the second Gauss law, it is more natural to start with the gauge transformation. The $\mathrm{U(1)}$ gauge field is represented by a pair $(N_p, A_\ell)$, with the gauge transformation 
\begin{align}\label{eq:U(1) gauge transformation 1}
\begin{split}
    &A_\ell\to A_\ell+2\pi n_\ell+(d\alpha)_\ell,\\
    &N_p \to N_p-(dn)_p,
\end{split}
\end{align}
where 
\begin{align}
    n\in C^1(\Lambda, \bZ),\quad \alpha\in C^0(\Lambda, \mathbb{R}).
\end{align}
More concretely, $n_\ell$'s are $\mathbb{Z}$-valued numbers on links and $\alpha_s$'s are $\mathbb{R}$-valued numbers on sites. See Appendix \ref{sec:convensions} for the conventions for (co)chains, operator $d$ and its adjoint $\delta$. The gauge transformation \eqref{eq:U(1) gauge transformation 1} is generated by two Gauss law operators
\begin{equation}\label{eq:gausslaw2}
    \exp\left(i\sum_\ell  n_\ell (2\pi E_\ell + (\delta B)_{\ell})\right), \qquad \exp\left( i \sum_s\alpha_s (\delta E)_s \right)
\end{equation}
The first one generates the $\bZ$ redundancy of $A_\ell$, which effectively makes it compact. The second one is the standard Gauss law for $U(1)$.\footnote{In the continuum we typically refer to the transformation associated with the gauge group as the gauge transformation, which is \eqref{eq:U(1) gauge transformation 1} in this case. And we call \eqref{eq:U(1) gauge transformation 2} the identification in the definition of a periodic scalar, rather than treating it as a Gauss law. However, on the lattice both these are realized as constraints, or the Gauss laws, so they should not be view as different kinds. In fact, in Section \ref{sec:duality}, we see this two roles get exchanged under the duality. \label{fn:gauss laws and constraints}} This Gauss law breaks the tensor product structure of $\mathcal{H}$.

One can package the Gauss laws \eqref{eq:gausslaw1} and \eqref{eq:gausslaw2} into a single one 
\footnote{The term exponents can not be placed in a single exponent because $(\delta B)_\ell$ does not commute with $N_p$ when $\ell \in \partial p$. But the two exponents commute.}
\begin{align}\label{eq:U(1) Gauss law operator}
    G(n,\alpha,\widetilde{n})=\exp\left(i\sum_\ell n_\ell\left(2\pi E_\ell+(\delta B)_\ell\right)+i\sum_s \alpha_s(\delta E)_s\right) \exp\left( 2\pi i\sum_p \widetilde{n}_pN_p\right).
\end{align}

The operator $\delta: C^k(\Lambda)\to C^{k-1}(\Lambda)$ can be thought as the lattice analogy for divergence and is defined in Appendix \ref{sec:convensions}. Note that instead of writing the Gauss law operator as a local operator, say at one site (for one non-zero $\alpha_s$), we defined it as the operator implementing any gauge transformation parameterized by $(n,\alpha, \widetilde{n})\in C^1(\Lambda, \mathbb{Z})\times C^0(\Lambda, \mathbb{R})\times C^2(\Lambda,\mathbb{Z})$. This significantly simplifies the discussions below.

The physical Hilbert space is a subspace $\mathcal{H}_\chi$ of the tensor product Hilbert space $\mathcal{H}$. More concretely, $\mathcal{H}_\chi$ is an eigenspace of the Gauss law $G(n,\alpha,\widetilde{n})$ with eigenvalue $\chi(n,\alpha,\widetilde{n})\in \mathrm{U(1)}$, i.e. 
\begin{align}\label{eq:U(1) physical Hilbert space}
    \mathcal{H}_\chi=\{\ket{\psi}\in\mathcal{H}:G\ket{\psi}=\chi\ket{\psi}\}\subset\mathcal{H}\,.
\end{align}
Traditionally \cite{Cheng:2022sgb,Lu:2026jnq}, $\chi$ is taken to be 1 hence $G(n,\alpha,\widetilde{n})=1$, reducing to 
\begin{align}\label{eq:trandidtional Gauss law}
\begin{split}
    e^{i(2\pi E_\ell+(\delta B)_\ell)}=1,\qquad (\delta E)_s=0, \qquad e^{2\pi iN_p}=1\,.
\end{split}
\end{align}
\eqref{eq:trandidtional Gauss law} is usually referred to as \emph{the} Gauss law for $\mathrm{U(1)}$ gauge theory on the lattice. But generally, $\chi$ can be a $\mathrm{U(1)}$-valued function
\begin{align}
    \chi:\quad C^1(\Lambda, \mathbb{Z})\times C^0(\Lambda, \mathbb{R})\times C^2(\Lambda,\mathbb{Z})\to \mathrm{U(1)}.
\end{align}
We refer to $G=\chi$ for nontrivial $\chi$ as a \emph{modified Gauss law}.

We make some remarks on $\chi$. 
\begin{enumerate}
    \item Firstly, different terms in \eqref{eq:U(1) Gauss law operator} commute. In particular, the $\exp(i n_\ell (\delta B)_\ell)$ from the first term commutes with $\exp(2\pi i \widetilde{n}_p N_p)$ from the second term.  It follows that $\chi$ is a group homomorphism between abelian groups, i.e. $\chi(n_1+n_2, \alpha,\widetilde{n})=\chi(n_1,\alpha,\widetilde{n})\chi(n_2,\alpha,\widetilde{n})$, etc. 
    \item Secondly, $\chi$ satisfies some consistency conditions, such as $\chi(dk,-2\pi k,0)=G(dk,-2\pi k,0)=1$ by the definition of $G$ \eqref{eq:U(1) Gauss law operator}. We will work out the space of consistent $\chi$'s \eqref{eq:moduli space for U(1) gauge theory}, that is, the moduli space physical Hilbert space $\mathcal{H}_\chi$. Each point in the moduli space fixes a physical Hilbert space on the lattice. 
    \item Thirdly, we only specify the gauge field degrees of freedom. The summation over site terms in $G$ \eqref{eq:U(1) Gauss law operator} should include matter local charges if we intend to include the matter. Then the moduli space of $\chi$ will be different. Here we focus on pure gauge theory. Including matter is briefly discussed in Section \ref{sec:Discussions and future directions}.
\end{enumerate}

\subsection{Moduli space of physical Hilbert spaces}\label{sec:Moduli space of physical Hilbert spaces}
We work out a way to characterize different $\chi$'s, hence obtaining the moduli space for physical Hilbert spaces $\mathcal{H}_\chi$.

Being a group homomorphism, $\chi(n,\alpha,\widetilde{n})=\chi(n,0,0)\chi(0,\alpha,0)\chi(0,0,\widetilde{n})$, $\chi(0,0,\widetilde{n})$ can be easily fixed, because $N_p$ only appears in the third factor of $G$. But the variable $E_\ell$ is involved in the first two factors of $G$ simultaneously, so $\chi(n,\alpha,0)$ needs to be treated with more care.

We start by looking at $\chi(0,0,-):C^2(\Lambda,\mathbb{Z})\to \mathrm{U(1)}$, since any one dimensional representation of $\mathbb{Z}$ is labeled by a $\theta\in \mathbb{R}/2\pi\mathbb{Z}$, i.e. $\rho_\theta(n)=e^{i\theta n}$, there exists a unique set of $\{\theta_p\in\mathbb{R}/2\pi\mathbb{Z}\}$, i.e. $\theta\in C^2(\Lambda, \mathbb{R}/2\pi\mathbb{Z})$, such that 
\begin{align}
    \chi(0,0,\widetilde{n})=\exp\left(i\sum_p \widetilde{n}_p \theta_p\right).
\end{align}

Then, for $\chi(0,-,0):C^0(\Lambda,\mathbb{R})\to \mathrm{U(1)}$, since any one dimensional representation of $\mathbb{R}$ is labeled by a real number $q\in\mathbb{R}$, i.e. $\rho_q(x)=e^{iqx}$, there exists a unique set of $\{q_s\in\mathbb{R}\}$, i.e, $q\in C^0(\Lambda,\mathbb{R})$, such that 
\begin{align}
    \chi(0,\alpha,0)=\exp\left(i\sum_s \alpha_sq_s\right).
\end{align}

Lastly, for $\chi(-,0,0): C^1(\Lambda,\mathbb{Z})\to \mathrm{U(1)}$, there exists a unique set of $\{\varphi_\ell\in\mathbb{R}/2\pi\mathbb{Z}\}$, i.e. $\varphi\in C^1(\Lambda, \mathbb{R}/2\pi\mathbb{Z})$, such that 
\begin{align}
    \chi(n,0,0)=\exp\left(i\sum_\ell n_\ell\varphi_\ell\right).
\end{align}

We claim that there exist $\lambda\in C^1(\Lambda, \mathbb{R})$, such that
\begin{align}\label{eq:existing lambda for chi}
    q_s=\frac{(\delta \lambda)_s}{2\pi}\in\mathbb{R},\quad \text{and} \quad \varphi_\ell=\lambda_\ell \mod 2\pi.
\end{align}
This is proven in Appendix \ref{sec:proof of lambda for chi}, where $\lambda$ is constructed, using some consistency conditions in $\chi$ mentioned in the above remarks. This is not an obvious fact which can easily be seen by modifying the right-hand side of the traditional local Gauss law \eqref{eq:trandidtional Gauss law}. Thus, the function $\chi$ can be written as
\begin{align}\label{eq:chi from lambda}
    \chi(n,\alpha,\widetilde{n})=\exp\left(i\sum_\ell n_\ell\lambda_\ell+i\sum_s \alpha_s \frac{(\delta\lambda)_s}{2\pi}+i\sum_p \widetilde{n}_p\theta_p\right).
\end{align}
Thus, on the physical Hilbert space
\begin{align}\label{eq:local Gauss law}
    (\delta E)_s=\frac{(\delta \lambda)_s}{2\pi},\quad e^{i(2\pi E+\delta B)_\ell}=e^{i\lambda_\ell},\quad e^{2\pi iN_p}=e^{i\theta_p}
\end{align}

Note that $\lambda$ and $\lambda'$ determine the same $\chi$ by \eqref{eq:chi from lambda} if and only if $\lambda'-\lambda\in C^1(\Lambda,2\pi \mathbb{Z})$ and $\delta\lambda=\delta\lambda'$.  Thus, the space of inequivalent $\lambda$ and $\theta$, is 
\begin{align}\label{eq:moduli space for U(1) gauge theory}
    \mathcal{M}=\frac{C^1(\Lambda,\mathbb{R})}{C^1(\Lambda,2\pi \mathbb{Z})\cap\ker \delta}\times C^2(\Lambda,\mathbb{R}/2\pi\mathbb{Z}).
\end{align}
Any representative element $(\lambda,\theta)\in\mathcal{M}$ fixes a $\chi$ by \eqref{eq:chi from lambda}\footnote{We identify $\mathcal{M}$ with the space of $\chi$'s, and just write $\chi\in\mathcal{M}$ hoping without confusion.}, hence by \eqref{eq:U(1) physical Hilbert space} fixing a physical Hilbert space $\mathcal{H}_\chi$ of pure $\mathrm{U(1)}$ gauge theory. We say that $\mathcal{M}$ is the \emph{moduli space of physical Hilbert spaces}.

\subsection{Different presentations}
Our discussion suffices for any Hamiltonian lattice model with constraints, not necessarily those coming from the obvious analog of the gauge theory in the continuum, one example will be given in Section \ref{sec:duality}. So what we mean by the modified Gauss laws are actually the modified constraints on physical Hilbert spaces. We use \emph{the Gauss laws} and \emph{the constraints} interchangeably.

The lattice model is defined by the pair $(G=\chi, H)$, where $H$ is the Hamiltonian. It is known \cite{Seifnashri:2026ema,Lu:2026itw} that we can perform some unitary transformation to the system, shifting between different presentations of the theory, for example from the modified Gauss laws to the modified Hamiltonian $(G'=UGU^{-1}=1, H'=U H U^{-1})$. 

Note that the unitary $U$ is the operator on the total Hilbert space $\mathcal{H}$, not just the unitary operators on physical space $\mathcal{H}_\chi$. Although the unitary transformation does not change the spectrum of $H$ in $\mathcal{H}$, the spectrum of $H|_{\mathcal{H}_\chi}$ is changed since the physical Hilbert space is changed. We will see such examples below. This is consistent with the fact that modification of the constraints changes the energy spectrum. 

Consider the following standard Maxwell theory Hamiltonian on the lattice,
\begin{align}\label{eq:unmodified Hamiltonian}
    H=\sum_p\frac{1}{2e^2}\left(2\pi N+dA\right)_p^2+\sum_\ell \frac{e^2}{2}E_\ell^2,
\end{align}
with the \emph{modified} constraints \eqref{eq:local Gauss law}, where the two terms are the energy for magnetic and electric fields, respectively. We can perform the unitary to do the shift,
\begin{align}\label{eq:unitary to un-mopdified Gauss law}
    U:\quad E \to E+\frac{\lambda}{2\pi},\quad N\to N+\frac{\theta}{2\pi},
\end{align}
which is realized by $U=\exp\left(-\tfrac{i}{2\pi}\langle\lambda,A\rangle-\tfrac{i}{2\pi}\langle\theta,B\rangle\right)$ (see \eqref{eq:inner product of cochain} for the shorthand notation of inner product). In this new presentation, the Gauss laws are unmodified as \eqref{eq:trandidtional Gauss law}, but the Hamiltonian gets modified
\begin{align}\label{eq:modified gauge theory Hamiltonian}
    H'=\sum_p\frac{1}{2e^2}\left(2\pi N+dA+\theta\right)_p^2+\sum_\ell \frac{e^2}{2}\left(E+\frac{\lambda}{2\pi}\right)_\ell^2.
\end{align}
See the summary in Table \ref{tab:2 presentations}. Of course, we can introduce some intermediate presentation, by conjugating the unitary which only shift one of $N$ and $E$, where both the Gauss law and Hamiltonian get modified. We will use different presentations interchangeably, since they defined the same theory.

\begin{table}
    \centering
    \begin{tabular}{c|c|c}
       Presentations & 
       \begin{tabular}{c}
            Modified Gauss law,  \\
            Unmodified Hamiltonian 
       \end{tabular}
        & \begin{tabular}{c}
            Unmodified Gauss law,  \\
            Modified Hamiltonian 
       \end{tabular} \\
       \hline
       Gauss laws&
       $\begin{aligned}
           \delta E &= \frac{\delta \lambda}{2\pi}\\
           e^{i(2\pi E+\delta B)} &= e^{i\lambda}\\
           e^{2\pi iN} &= e^{i\theta}
       \end{aligned}$
       &
       $\begin{aligned}
           \delta E &= 0\\
           e^{i(2\pi E+\delta B)} &= 1\\
           e^{2\pi iN} &= 1
       \end{aligned}$
       \\
       \hline
       Hamiltonian &
       $\begin{aligned}
           H=\sum_p&\frac{1}{2e^2}\left(2\pi N+dA\right)_p^2\\
           &+\sum_\ell \frac{e^2}{2}E_\ell^2
       \end{aligned}$
       &
       $\begin{aligned}
           H'=\sum_p&\frac{1}{2e^2}\left(2\pi N+dA+\theta\right)_p^2\\
           &+\sum_\ell \frac{e^2}{2}\left(E+\frac{\lambda}{2\pi}\right)_\ell^2
       \end{aligned}$
    \end{tabular}
    \caption{Two presentations of the modified Maxwell theory on the lattice. The transformation between the two presentations is achieved by the unitary \eqref{eq:unitary to un-mopdified Gauss law}.}
    \label{tab:2 presentations}
\end{table}

Note that since we have classified all possible modified Gauss laws in Section \ref{sec:Moduli space of physical Hilbert spaces}, which is independent of the Hamiltonian, the unitary transformation \eqref{eq:unitary to un-mopdified Gauss law} to another presentation works for any Hamiltonian that commutes with the Gauss law.

\section{Moduli space and defects}\label{sec:Moduli space and dynamical defects}

To understand the meaning of nontrivial $\chi$'s, we need to take a digression and discuss inserting (dynamical) defect in the Hamiltonian lattice model. Here defects are to be contrasted with operators, where the former stretch in the time direction.

We show that for a nontrivial $\chi$ \eqref{eq:chi from lambda}, determined by $(\lambda,\theta)\in\mathcal{M}$, the non-zero $\theta$ corresponds to the insertion of topological $\mathrm{U(1)}_e^{(1)}$ line defects, and the non-zero $\lambda$ corresponds to the insertion of \emph{non-topological} monodromy (or twist) defects for $\mathrm{U(1)}_m^{(0)}$ open surface defect. Hence the physical Hilbert space $\mathcal{H}_\chi$ is the corresponding defect Hilbert space for 2+1d $\mathrm{U(1)}$ pure gauge theory.

\subsection{Symmetries}\label{sec:symmetries}
We work in the first presentation where the Gauss laws are modified \eqref{eq:local Gauss law} but the Hamiltonian is unmodified \eqref{eq:unmodified Hamiltonian}. The theory has a magnetic 0-from $\mathrm{U(1)}_m^{(0)}$, with the charge operator
\begin{align}\label{eq:magnetic 0-form charge}
    Q^{(0)}=\sum_p \left(N-\frac{\theta}{2\pi}\right)_p.
\end{align}
Eigenvalues of the local charge $N-\theta/2\pi$ are integers because of the third equation in the Gauss laws \eqref{eq:local Gauss law}. Though $N_p$ is not gauge invariant, $Q^{(0)}$ is since $\sum_pN_p=\sum_p(N+dA/2\pi)_p$, where each summand is gauge invariant. The charged point operator is the monopole $e^{iB_p}$, which is not involved in the Hamiltonian, implying $[Q^{(0)},H]=0$.

\begin{figure}
\centering
\begin{tikzpicture}[
    scale=1,
    line join=round,
    line cap=round,
    every node/.style={font=\footnotesize}
]
    \draw (0,1)--(5,1);
    \draw (0,2)--(5,2);
    \draw (0,3)--(5,3);
    \draw (0,4)--(5,4);
    \draw (1,0)--(1,5);
    \draw (2,0)--(2,5);
    \draw (3,0)--(3,5);
    \draw (4,0)--(4,5);
    \draw[blue, decoration = {markings, mark=at position 0.6 with {\arrow[scale=1.0]{stealth}}}, postaction=decorate] (1.5,3.5) -- (1.5,0.5);
    \draw[blue, decoration = {markings, mark=at position 0.6 with {\arrow[scale=1.0]{stealth}}}, postaction=decorate] (1.5,0.5) -- (3.5,0.5);
    \draw[blue, decoration = {markings, mark=at position 0.6 with {\arrow[scale=1.0]{stealth}}}, postaction=decorate] (3.5,0.5) -- (3.5,3.5);
    \draw[blue, decoration = {markings, mark=at position 0.6 with {\arrow[scale=1.0]{stealth}}}, postaction=decorate] (3.5,3.5) -- (1.5,3.5);
    \node[blue] at (4.5,2.5) {$L^\star$};
    \draw[red,thick] (3,3) -- (3,4);
    \draw[red,thick] (2,3) -- (2,4);
    \draw[red,thick] (1,3) -- (2,3);
    \draw[red,thick] (1,2) -- (2,2);
    \draw[red,thick] (1,1) -- (2,1);
    \draw[red,thick] (3,0) -- (3,1);
    \draw[red,thick] (2,0) -- (2,1);
    \draw[red,thick] (3,3) -- (4,3);
    \draw[red,thick] (3,2) -- (4,2);
    \draw[red,thick] (3,1) -- (4,1);
\end{tikzpicture}
    \caption{Symmetry defect for electric 1-from $\mathrm{U(1)}_e^{(1)}$ symmetry.}
    \label{fig:symmetry defect for electric 1-from U(1) symmetry}
\end{figure}

The theory also has an electric 1-from $\mathrm{U(1)}_e^{(1)}$ symmetry, with the charge operator
\begin{align}\label{eq:electic 1-form charge}
    Q^{(1)}=\sum_{\ell^\star\in L^\star} \left(E+\frac{\delta B}{2\pi}-\frac{\lambda}{2\pi}\right)_\ell,
\end{align}
where $L^\star$ is the closed loop in the dual lattice, as in Figure \ref{fig:symmetry defect for electric 1-from U(1) symmetry}. Eigenvalues of the local charge are integers because of the second equation in the Gauss laws \eqref{eq:local Gauss law}. $Q^{(1)}$ is topological because of $[Q^{(1)},H]=0$ (in temporal direction) and $\delta(E-\delta B/2\pi-\lambda/2\pi)=0$ (in spatial directions), which is also a part of the Gauss laws. The charged line operator is the Wilson line $\exp(i\sum_{\ell\in\gamma}A_\ell)$.

Even though $[Q^{(0)},Q^{(1)}]=0$, the local charges do not commute, as a manifestation the mixed anomaly similar to \cite{Seifnashri:2026ema}.

The continuous counterparts of the charge operators are 
\begin{align}
    \mathcal{Q}^{(0)}=\int_\Sigma\frac{f}{2\pi}, \quad \text{and}\quad \mathcal{Q}^{(1)}=\int_L\frac{i}{e^2}\star f.
\end{align}
Strictly speaking, we have not derived the interpretation of the modified Gauss law, so the operator in the continuum is valid away from the spacetime region where modifications are made.

\subsection{Topological defects}\label{sec:topological defect}

If the defect is topological, inserting such symmetry defects has been studied extensively in \cite{Cheng:2022sgb,Seifnashri:2023dpa,Seiberg:2023cdc,Choi:2024rjm,Chatterjee:2024gje,Seifnashri:2025fgd,Seiberg:2024gek,Seifnashri:2025vhf,Li:2023knf,Cao:2023doz,Cao:2022lig}. 

\begin{figure}
\centering
\begin{tikzpicture}[
    scale=1,
    line join=round,
    line cap=round,
    every node/.style={font=\footnotesize}
]
    \draw (0,1)--(6,1);
    \draw (0,2)--(6,2);
    \draw (1,0)--(1,3);
    \draw (2,0)--(2,3);
    \draw (3,0)--(3,3);
    \draw (4,0)--(4,3);
    \draw (5,0)--(5,3);
    \draw[blue, decoration = {markings, mark=at position 0.6 with {\arrow[scale=1.0]{stealth}}}, postaction=decorate] (0.5,1.5) -- (3.5,1.5);
    \draw[blue, dotted] (0,1.5) -- (0.5,1.5);
    \node[circle, fill=red, inner sep=0.6pt] at (3.5,1.5) {};
    \node[blue] at (2.5,0.5) {$L^\star$};
    \node[red] at (3.8,1.5) {$p_0$};
    \draw[red,thick] (1,1) -- (1,2);
    \draw[red,thick] (2,1) -- (2,2);
    \draw[red,thick] (3,1) -- (3,2);
\end{tikzpicture}
    \caption{Half-line operator $U^{(1)}(\alpha,p_0)$ to insert a $\mathrm{U(1)}_e^{(1)}$ defect at plaquette $p_0$.}
    \label{fig:Half-line operator for electric 1-from U(1) symmetry}
\end{figure}

We start by inserting a $\mathrm{U(1)}_e^{(1)}$ defect at plaquette $p_0$. This is achieved by conjugating the system (including both the Gauss laws and the Hamiltonian) by a truncated $U(1)_e^{(1)}$ symmetry operator $e^{i \alpha Q^{(1)}}$ where $Q^{(1)}$ is \eqref{eq:electic 1-form charge} but defined on truncated line $L^\star$ with a boundary at $\partial L^\star=p_0$ as in Figure \ref{fig:Half-line operator for electric 1-from U(1) symmetry},
\begin{align}
\begin{split}
    U^{(1)}(\alpha,p_0)=\exp\left[i\alpha \sum_{L^\star}\left(E-\frac{\delta B}{2\pi}-\frac{\lambda}{2\pi}\right)_\ell\right]\\
    =\exp\left[i\alpha \sum_{L^\star}\left(E-\frac{\lambda}{2\pi}\right)_\ell\right]\exp\left(i\frac{\alpha}{2\pi}B_{p_0}\right).
\end{split}
\end{align}
Conjugation by $U^{(1)}(\alpha,p_0)$ shifts $A_\ell\to A_\ell + \alpha$ for those $\ell^\star\in L^\star$ (red links in Figure \ref{fig:Half-line operator for electric 1-from U(1) symmetry}), and shifts $N_{p_0}\to N_{p_0}-\alpha/2\pi$. Thus, the Hamiltonian $H$ \eqref{eq:unmodified Hamiltonian} and the first two Gauss law operators \eqref{eq:local Gauss law} are invariant, while the last Gauss law operator gets a phase $e^{-i\alpha}$ at $p_0$. This means that inserting a $\mathrm{U(1)}_e^{(1)}$ defect at plaquette $p_0$ changes the constraint from $G(n,\alpha,\widetilde{n})=\chi(n,\alpha,\widetilde{n})$ to $G(n,\alpha,\widetilde{n})=\chi(n,\alpha,\widetilde{n}) e^{i\widetilde{n}_{p_0}\alpha}$, or equivalently
\begin{align}\label{eq:shifting theta}
    (\lambda,\theta)\to(\lambda,\theta+\alpha1_{p_0}),
\end{align}
where $1_{p_0}$ is a 2-cochain which is $1$ on plaquette $p_0$ and $0$ elsewhere.

Note that this is a transitive action on the $C^2(\Lambda,\mathbb{R}/2\pi\mathbb{Z})$ factor of the moduli space $\mathcal{M}$ \eqref{eq:moduli space for U(1) gauge theory}. Thus, the non-zero $\theta$ in the modified Gauss laws corresponds to the insertion of topological $\mathrm{U(1)}_e^{(1)}$ line defects. If we shift to the second presentation in Table \ref{tab:2 presentations}, this agrees with (4.27) in \cite{Komargodski:2025jbu}.

Now consider a $\mathrm{U(1)}_m^{(0)}$ topological defect supported on a closed surface $\Sigma$ with parameter $\alpha\in \mathbb{R}/2\pi\mathbb{Z}$. In the spacetime picture, the surface $\Sigma$ intersects the equal time slice on a closed cycle $L_0$, as in Figure \ref{fig:spacetime U(1) defect}. On the lattice, inserting this defect on $L_0$ is achieved by conjugating the system by a truncated $\mathrm{U(1)}_m^{(0)}$ defect $e^{i \alpha Q^{(0)}}$ where $Q^{(0)}$ is \eqref{eq:magnetic 0-form charge} but defined on truncated surface $D$ with a boundary at $\partial D= L_0$ as in Figure \ref{fig:inserting topological defect on the lattice},
\begin{align}\label{eq:U0twist}
    U^{(0)}(\alpha,L_0)=\exp\left(i\alpha\sum_{p\in D}\left(N-\frac{\theta}{2\pi}\right)_p\right).
\end{align}
Conjugating by $U^{(0)}(\alpha,L_0)$ shifts $B_p\to B_p + \alpha$ for those $p\in D$. Again, the Hamiltonian $H$ \eqref{eq:unmodified Hamiltonian}, the first and the last Gauss law operators \eqref{eq:local Gauss law} are invariant, while the second Gauss law operator gets a phase $e^{i\alpha}$ at those $\ell\in L_0$. This means that inserting a $\mathrm{U(1)}_e^{(1)}$ defect at $L_0$ changes the modified Gauss laws
\begin{align}\label{eq:topolocal defect shifts lambda}
    (\lambda,\theta)\to(\lambda-\alpha1_{L_0},\theta).
\end{align}
Here $1_{L}$ is an integer 1-cochain which is $+1$ on the links of $L$ with positive orientation, $-1$ on links taken the negative orientation, and $0$ elsewhere.

\begin{figure}
\centering
\begin{tikzpicture}[
    x={(-0.5cm,-0.4cm)},  
    y={(1cm,0cm)},         
    z={(0cm,1cm)},         
    scale=1,
    line join=round,
    line cap=round,
    every node/.style={font=\footnotesize}
]
\draw (2,-2,0) -- (-2,-2,0) -- (-2,2,0) -- (2,2,0);
\foreach \zz in {-1.5,-1.49,...,1.5} {
    \begin{scope}[canvas is xy plane at z=\zz]
        \fill[fill=red!20,opacity=0.5, even odd rule]
            (0,0) circle (0.8)
            (0,0) circle (0.77);
    \end{scope}
}
\draw (2,-2,0) -- (2,2,0);
\draw[blue,decoration = {markings, mark=at position 0.2 with {\arrow[scale=1.0]{stealth}}}, postaction=decorate] (0,0,0) circle (0.8);
\node[blue] at (0,1,0) {$L_0$};
\node[red] at (1,0,2.15) {$\Sigma$};
\end{tikzpicture}
    \caption{The spacetime picture for the $\mathrm{U(1)}^{(0)}_m$ surface defect intersects with the equal time slice on a closed cycle $L$.}
    \label{fig:spacetime U(1) defect}
\end{figure}

\begin{figure}
\centering
\begin{tikzpicture}[
    scale=1,
    line join=round,
    line cap=round,
    every node/.style={font=\footnotesize}
]
\draw (0,1)--(5,1);
    \draw (0,2)--(5,2);
    \draw (0,3)--(5,3);
    \draw (0,4)--(5,4);
    \draw (1,0)--(1,5);
    \draw (2,0)--(2,5);
    \draw (3,0)--(3,5);
    \draw (4,0)--(4,5);
    \draw[fill=gray!30,opacity=0.5] (1,1) -- (4,1) -- (4,3) -- (1,3) -- cycle;
    \draw[blue, decoration = {markings, mark=at position 0.6 with {\arrow[scale=1.0]{stealth}}}, postaction=decorate] (1,1) -- (4,1);
    \draw[blue, decoration = {markings, mark=at position 0.6 with {\arrow[scale=1.0]{stealth}}}, postaction=decorate] (4,1) -- (4,3);
    \draw[blue, decoration = {markings, mark=at position 0.6 with {\arrow[scale=1.0]{stealth}}}, postaction=decorate] (4,3) -- (1,3);
    \draw[blue, decoration = {markings, mark=at position 0.6 with {\arrow[scale=1.0]{stealth}}}, postaction=decorate] (1,3) -- (1,1);
    \node[blue] at (4.5,2.5) {$L_0$};
    \node[red] at (2.5,2){$D$};
\end{tikzpicture}
    \caption{Inserting $\mathrm{U(1)}^{(0)}_m$ defect along $L_0$ on the lattice.}
    \label{fig:inserting topological defect on the lattice}
\end{figure}

Note that the $2\pi$ periodicity of $\alpha$ is compatible with the identification of $\lambda$ in \eqref{eq:moduli space for U(1) gauge theory}. Thus, \eqref{eq:topolocal defect shifts lambda} is a well-defined action on the moduli space $\mathcal{M}$.

We also note that the modified Gauss law in 1+1d theory with $U(1)$ symmetry was discussed in \cite{Cheng:2022sgb,Seifnashri:2026ema}, where the modification is identified with twisting the boundary condition by $U(1)$ symmetry. This is a lower dimensional and lower symmetry-form analogue of what we discussed above. 

\subsection{Monodromy defects}

However, the action \eqref{eq:topolocal defect shifts lambda} on the first factor of moduli space $\mathcal{M}$ is not transitive, since $\delta\lambda$ is invariant under such shift. This means stating with $\lambda_0$, say the trivial one $\lambda_0=0$, we can only reach other value of $\lambda$ with $\delta \lambda = \delta \lambda_0=0$ by inserting topological defect for $\mathrm{U(1)}^{(0)}_m$ symmetry, and cannot reach the whole space of $\lambda$, i.e. the first factor of $\mathcal{M}$ in \eqref{eq:moduli space for U(1) gauge theory}.

We start with the continuum and go beyond the realm of topological defects a little. Consider defect 
\begin{align}\label{eq:U_x(Sigma)}
    \mathcal{U}_x(\Sigma)=\exp\left(ix\int_\Sigma \frac{F}{2\pi}\right),
\end{align}
where $\Sigma$ is taken to be a surface with non-empty boundary $\gamma$, as in Figure \ref{fig:spacetime open defect}. Now $\int_\Sigma F$ is not quantized on open surface, and hence $x\in\mathbb{R}$. The operator $\mathcal{U}_x(\Sigma)$ is invariant under infinitesimal deformation of the bulk of $\Sigma$, as long as $\gamma=\partial\Sigma$ is unchanged. In short, \eqref{eq:U_x(Sigma)} is a truncated $\mathrm{U(1)}_m^{(0)}$ surface defect.

\begin{figure}
\centering
\begin{tikzpicture}[
    x={(-0.5cm,-0.4cm)},  
    y={(1cm,0cm)},         
    z={(0cm,1cm)},         
    scale=1,
    line join=round,
    line cap=round,
    every node/.style={font=\footnotesize}
]
\draw (2,-2,0) -- (-2,-2,0) -- (-2,2,0);
\begin{scope}[canvas is yz plane at x=0]
    \fill[red!20,opacity=0.6]
        (-1.2,-2) rectangle (1.2,2);
\end{scope}
\draw[blue,decoration = {markings, mark=at position 0.6 with {\arrow[scale=1.0]{stealth}}}, postaction=decorate] (0,-1.2,0)--(0,1.2,0);
\draw[red,decoration = {markings, mark=at position 0.6 with {\arrow[scale=1.0]{stealth}}}, postaction=decorate] (0,-1.2,2)--(0,-1.2,-2);
\draw[red,decoration = {markings, mark=at position 0.6 with {\arrow[scale=1.0]{stealth}}}, postaction=decorate] (0,1.2,-2)--(0,1.2,2);
\draw (2,-2,0) -- (2,2,0) -- (-2,2,0);
\node[blue] at (0.4,0,0) {$L$};
\node[red] at (1,0,2.15) {$\Sigma$};
\node[red] at (0,1.5,1) {$\gamma$};
\end{tikzpicture}
    \caption{The spacetime picture for the defect $\mathcal{U}_x(\Sigma)$, where surface $\Sigma$ with boundary $\gamma$ intersects with the equal time slice on a segment $L$.}
    \label{fig:spacetime open defect}
\end{figure}

Equivalent, $\mathcal{U}_x(\Sigma)$ is the monodromy defect of the monopole operator \cite{Komargodski:2025jbu,Copetti:2026ncv,Castro:2026mlo,Shao:2025qvf,Barkeshli:2025cjs}. If we insert $\gamma$ at $r=0$ in space and along the time direction, denoting $M_n$ for the charge $n$ monopole operator, the boundary condition in the angular direction is
\begin{align}
    M_n(t,r,\theta+2\pi)=e^{inx}M_n(t,r,\theta),
\end{align}
since the monopole gets acted by the surface operator once it pass through the defect. The surface $\Sigma$ is the branch cut for monodromy, hence topological. However, the boundary of the branch cut is non-topological.\footnote{Since $\mathrm{U(1)}_m^{(0)}$ acts faithfully in the Maxwell theory, its topological surface operator can not be cut open topologically \cite{Aharony:2023amq}.} 

\begin{figure}
\centering
\begin{tikzpicture}[
    scale=1,
    line join=round,
    line cap=round,
    every node/.style={font=\footnotesize}
]
    \draw (0,1)--(5,1);
    \draw (0,2)--(5,2);
    \draw (0,3)--(5,3);
    \draw (1,0)--(1,4);
    \draw (2,0)--(2,4);
    \draw (3,0)--(3,4);
    \draw (4,0)--(4,4);
    \draw[blue, decoration = {markings, mark=at position 0.6 with {\arrow[scale=1.0]{stealth}}}, postaction=decorate] (1,2) -- (4,2);
    \node[blue,thick] at (2.5,2.5) {$L$};
    \node[circle, fill=red, inner sep=0.6pt] at (4,2) {};
    \node[circle, fill=red, inner sep=0.6pt] at (1,2) {};
    \node[red] at (4.5,1.5){$e^{ix}$};
    \begin{scope}[shift={(4,2)}]
        \draw[
            red,
            decoration={markings, mark=at position 1 with {\arrow[scale=1.0]{stealth}}},
            postaction=decorate
        ]
        (-170:0.2) arc[start angle=-170, end angle=170, radius=0.20];
    \end{scope}
    \begin{scope}[shift={(1,2)}]
        \draw[
            red,
            decoration={markings, mark=at position 1 with {\arrow[scale=1.0]{stealth}}},
            postaction=decorate
        ]
        (10:0.2) arc[start angle=10, end angle=350, radius=0.2];
    \end{scope}
\end{tikzpicture}
    \caption{Inserting defect $\mathcal{U}_x(\Sigma)$ along $L$ on the lattice. The monodromy is introduced on the ends of the segment $L$.}
    \label{fig:inserting monodromy defect on the lattice}
\end{figure}

Let $\Sigma$ intersects with the spatial lattice on links $L$ and $\gamma$ intersects on sites $s$, than $L$ is a segment with ends $s=\partial L$, as in Figure \ref{fig:inserting monodromy defect on the lattice}. From the section \ref{sec:topological defect}, we conclude that inserting the defect \eqref{eq:U_x(Sigma)} on the lattice, corresponds to shifting
\begin{align}\label{eq:dynamical defect shifts lambda}
    \lambda_{\ell_*}\to\lambda_{\ell_*}-x \quad \text{for}\quad \ell_*\in L.
\end{align}
This is to be contrasted with \eqref{eq:topolocal defect shifts lambda}, where $L$ are closed loops and $\alpha\in \mathbb{R}/2\pi\mathbb{Z}$. While in \eqref{eq:dynamical defect shifts lambda}, $L$ is an open segment and $x\in\mathbb{R}$. 

One may ask whether the shift \eqref{eq:dynamical defect shifts lambda} can be achieved by conjugating a unitary operator, similar to \eqref{eq:U0twist}. To realize such a shift in the modified Gauss law, one can conjugate it by the unitary
\begin{equation}
    U^{(0)}(x, L) = \exp\left(-i \frac{x}{2\pi} \sum_{\ell \in L} A_\ell\right).
\end{equation}
However, due to the dependence of $A_\ell$, the unitary does not commute with the Hamiltonian. Hence the shift \eqref{eq:dynamical defect shifts lambda} is not achieved by merely conjugating a unitary operator to the system (including the Gauss law and the Hamiltonian).

Note that the action \eqref{eq:dynamical defect shifts lambda} on the space of $\lambda$ is transitive. Start the trivial $\lambda=0$, we can always insert defect $\mathcal{U}_{\lambda_\ell}$ on each link to obtain any nontrivial $\lambda$. The action \eqref{eq:dynamical defect shifts lambda} and \eqref{eq:shifting theta} together is a transitive action on the moduli space $\mathcal{M}$. Hence we have understood that the physical Hilbert space $\mathcal{H}_\chi$, determined by the modified Gauss law $G=\chi$, is the defect Hilbert space with $\mathrm{U(1)}_e^{(1)}$ topological line defects, and monodromy (or twist) defects for $\mathrm{U(1)}_m^{(0)}$. Shifting between presentations in Table \ref{tab:2 presentations} do not change the result.

\section{Magnetic translation}\label{sec:magnetic translation}
In this section, we study the physical Hilbert space $\mathcal{H}_\chi$ by looking at operators on $\mathcal{H}_\chi$, in particular the lattice translation operators. We show that for nontrivial $\chi$, the lattice translations do not commute and satisfy the symmetry algebra for magnetic translations \cite{Zak:1964zz,Hongo:2026rus}, being a manifestation of the insertion monodromy defects concluded in Section \ref{sec:Moduli space and dynamical defects}.

Recall that (physical) operators of theory are those operators acting on the physical Hilbert space $\mathcal{H}_\chi$, i.e. $\mathcal{O}\in\mathrm{End}(\mathcal{H}_\chi)$. Usually we start with an operator $\widetilde{\mathcal{O}}$ on $\mathcal{H}$ and restrict it to the subspace $\mathcal{H}_\chi$, denoted by $\widetilde{\mathcal{O}}|_{\mathcal{H}_\chi}$. We say $\widetilde{\mathcal{O}}$ is physical if
\begin{equation}\label{eq:GOG}
    G(n,\alpha)\widetilde{\mathcal{O}}G^{-1}(n,\alpha)|_{\mathcal{H}_\chi}=\widetilde{\mathcal{O}}|_{\mathcal{H}_\chi}, 
\end{equation}
i.e. $G\widetilde{\mathcal{O}}G^{-1}$ and $\widetilde{\mathcal{O}}$ acts on states in $\mathcal{H}_\chi$ in the same way. Equivalently, since $G^{-1}= \chi^{-1}$ on $\mathcal{H}_\chi$, \eqref{eq:GOG} can also be recast as 
\begin{align}\label{eq:physical operator from Gauss law}
    G(n,\alpha)\widetilde{\mathcal{O}}|_{\mathcal{H}_\chi}=\chi(n,\alpha)\widetilde{\mathcal{O}}|_{\mathcal{H}_\chi}.
\end{align}

Motivated by the results in the last section, we would like to derive the lattice translation operator on $\mathcal{H}_\chi$. We start with the translation on $\mathcal{H}$, which acts on any local operator by
\begin{align}
\begin{split}
    T_x\mathcal{O}_c T^{-1}_x=\mathcal{O}_{c+\hat{x}}, \qquad T_y\mathcal{O}_c T^{-1}_y=\mathcal{O}_{c+\hat{y}},
\end{split}
\end{align}
where $c$ refers to a site, link, or plaquette, and $\hat{x}$, $\hat{y}$ are unite vectors on the square lattice $\Lambda$. The conventions for labeling neighboring sites etc. are listed in Appendix \ref{sec:convensions}.

Let $\chi$ correspond to $(\lambda,\theta)\in\mathcal{M}$. Generally, $\lambda$ and $\theta$ take different values on links and plaquettes, making that $T_x GT_x^{-1}|_{\mathcal{H}_\chi}\neq \chi$. By \eqref{eq:physical operator from Gauss law}, $T_x$ is not a physical operator. However, we propose that a modified translation can be defined as follows, 
\begin{align}\label{eq:modified translation}
\begin{split}
    &T_1=
    T_x\,\exp\left(i\sum_p \lambda_{p+\frac{\hat{x}}{2}}N_p\right)\exp\left(-i\sum_{\ell \parallel \hat{x}}(\delta\lambda)_{\ell+\frac{\hat{x}}{2}}\frac{A_\ell}{2\pi}\right)\exp\left(-i\sum_p\theta_p\left(E+\frac{\delta B}{2\pi}\right)_{p-\frac{\hat{x}}{2}}\right)\\
    &T_2=
    T_y\,\exp\left(-i\sum_p \lambda_{p+\frac{\hat{y}}{2}}N_p\right)\exp\left(-i\sum_{\ell \parallel \hat{y}}(\delta\lambda)_{\ell+\frac{\hat{y}}{2}}\frac{A_\ell}{2\pi}\right)\exp\left(-i\sum_p\theta_p\left(E+\frac{\delta B}{2\pi}\right)_{p-\frac{\hat{y}}{2}}\right)
\end{split}
\end{align}
satisfying \eqref{eq:physical operator from Gauss law}, making it the translation on the physical Hilbert space $\mathcal{H}_\chi$. Here by $\ell\parallel\hat{x}$  we only sum over those links parallel to $x$ direction, similar for $\ell\parallel\hat{y}$. Agian, we work in the first presentation in Table \ref{tab:2 presentations}.

\begin{figure}
\centering
\begin{tikzpicture}[
    scale=1,
    line join=round,
    line cap=round,
    every node/.style={font=\footnotesize}
]

\begin{scope}
    \draw (0,1)--(4,1);
    \draw (0,2)--(4,2);
    \draw (0,3)--(4,3);
    \draw (1,0)--(1,4);
    \draw (2,0)--(2,4);
    \draw (3,0)--(3,4);
    
    \node[circle, fill=red, inner sep=0.6pt] at (1,2) {};
    \node[circle, fill=red, inner sep=0.6pt] at (2,2) {};
    
    \node[blue, fill=white, inner sep=1pt] at (0.5,2) {\footnotesize{$E_1$}};
    \node[blue, fill=white, inner sep=1pt] at (1,2.5) {\footnotesize{$E_2$}};
    \node[blue, fill=white, inner sep=1pt] at (1.5,2) {\footnotesize{$E_3$}};
    \node[blue, fill=white, inner sep=1pt] at (1,1.5) {\footnotesize{$E_4$}};
    \node[blue, fill=white, inner sep=1pt] at (2,2.5) {\footnotesize{$E_5$}};
    \node[blue, fill=white, inner sep=1pt] at (2.5,2) {\footnotesize{$E_6$}};
    \node[blue, fill=white, inner sep=1pt] at (2,1.5) {\footnotesize{$E_7$}};
    
    \node[red] at (1.2,1.75) {\footnotesize{$s_1$}};
    \node[red] at (2.2,1.75) {\footnotesize{$s_2$}};
\end{scope}

\begin{scope}[xshift=5.5cm]
    \draw (0,1)--(4,1);
    \draw (0,2)--(4,2);
    \draw (0,3)--(4,3);
    \draw (1,0)--(1,4);
    \draw (2,0)--(2,4);
    \draw (3,0)--(3,4);
    
    \node[red] at (0.5,2.5) {\footnotesize{$B_1$}};
    \node[red] at (1.5,2.5) {\footnotesize{$B_2$}};
    \node[red] at (2.5,2.5) {\footnotesize{$B_3$}};
    
    \node[blue, fill=white, inner sep=1pt] at (1,2.5) {\footnotesize{$E_1$}};
    \node[blue, fill=white, inner sep=1pt] at (2,2.5) {\footnotesize{$E_2$}};
\end{scope}

\begin{scope}[xshift=11cm]
    \draw (0,1)--(4,1);
    \draw (0,2)--(4,2);
    \draw (0,3)--(4,3);
    \draw (1,0)--(1,4);
    \draw (2,0)--(2,4);
    \draw (3,0)--(3,4);
    
    \node[red] at (1.5,2.5) {\footnotesize{$B_1$}};
    \node[red] at (2.5,2.5) {\footnotesize{$B_3$}};
    \node[red] at (1.5,1.5) {\footnotesize{$B_2$}};
    \node[red] at (2.5,1.5) {\footnotesize{$B_4$}};
    
    \node[blue, fill=white, inner sep=1pt] at (1.5,2) {\footnotesize{$E_1$}};
    \node[blue, fill=white, inner sep=1pt] at (2.5,2) {\footnotesize{$E_2$}};
    \node[blue, fill=white, inner sep=1pt] at (2,2.5) {\footnotesize{$E_3$}};
    \node[blue, fill=white, inner sep=1pt] at (2,1.5) {\footnotesize{$E_4$}};
\end{scope}

\end{tikzpicture}
    \caption{Conjugating the local terms in Gauss law operator with the modified translation operator $T_1$. The three panels shows the label for neighboring links and plaquettes.}
    \label{fig:translating local Gauss law}
\end{figure}

\begin{proof}
In the following, we give the derivation for \eqref{eq:modified translation}.

Since $T_i$ is invertible, we prove \eqref{eq:physical operator from Gauss law} by showing that $T_iGT_i^{-1}|_{\mathcal{H}_\chi}=\chi$. From \eqref{eq:local Gauss law}, we only need to conjugate $T_i$ on each local term in the Gauss law operator $G$. Take $T_1$ as an example. For any site, say $s_1$ in Figure \ref{fig:translating local Gauss law} (left), only the terms involving horizontal neighboring links, i.e. $A_1$ and $A_3$, and $T_x$ conjugate non-trivially on $(\delta E)_{s_1}$,
\begin{align}
\begin{split}
    &T_x \, \exp\biggl(-i(\delta\lambda)_{s_1}\frac{A_1}{2\pi} - i(\delta\lambda)_{s_2}\frac{A_3}{2\pi}\biggr) \, (\delta E)_{s_1} \, \exp\biggl(i(\delta\lambda)_{s_1}\frac{A_1}{2\pi} + i(\delta\lambda)_{s_2}\frac{A_3}{2\pi}\biggr) T_x^{-1} \\
    &\quad = T_x \, (\delta E)_{s_1} \, T_x^{-1} + \frac{(\delta\lambda)_{s_1}}{2\pi} - \frac{(\delta\lambda)_{s_2}}{2\pi} \\
    &\quad = (\delta E)_{s_2} + \frac{(\delta\lambda)_{s_1}}{2\pi} - \frac{(\delta\lambda)_{s_2}}{2\pi}.
\end{split}
\end{align}
Restricting to $\mathcal{H}_\chi$, it equals $\frac{(\delta\lambda)_{s_1}}{2\pi}$, which means that $\delta E$ is invariant when conjugated with $T_1$. For a plaquette $p$, $T_1\exp(2\pi iN_p)T_1^{-1}$ is calculated similarly, and $\exp(2\pi iN_p)$ is invariant restricted on $\mathcal{H}_\chi$.

For any vertical link, say the one on which is $E_1$ in Figure \ref{fig:translating local Gauss law} (middle), only the terms involving neighboring plaquettes, i.e. $N_1$ and $N_2$, and $T_x$ conjugate non-trivially on $e^{i(2\pi E_1+B_1-B_2)}$,
\begin{align}
\begin{split}
    &T_x \, e^{i(\lambda_1 N_1 + \lambda_2 N_2)} \, e^{i(2\pi E_1+B_1-B_2)} \, e^{-i(\lambda_1 N_1 + \lambda_2 N_2)} T_x^{-1} \\
    &\quad = T_x \, e^{i(2\pi E_1+B_1-B_2)} \, T_x^{-1} \, e^{i(\lambda_1-\lambda_2)} \\
    &\quad = e^{i(2\pi E_2+B_2-B_3)} \, e^{i(\lambda_1-\lambda_2)}.
\end{split}
\end{align}
Restricting to $\mathcal{H}_\chi$, it equals $e^{i\lambda_1}$, which means that $e^{i(2\pi E_1+B_1-B_2)}$ is invariant when conjugated with $T_1$.

Finally, for any horizontal link, say the one on which is $E_1$ in Figure \ref{fig:translating local Gauss law} (right), the terms involving $N_1$, $N_2$, $A_1$, and $T_x$ conjugate non-trivially on $e^{i(2\pi E_1+B_1-B_2)}$,
\begin{align}
\begin{split}
    &T_x \, e^{i(\lambda_1 N_1 + \lambda_2 N_2)} \exp\biggl(-i(\delta\lambda)_{s_1}\frac{A_1}{2\pi}\biggr) e^{i(2\pi E_1+B_1-B_2)} \exp\biggl(i(\delta\lambda)_{s_1}\frac{A_1}{2\pi}\biggr) e^{-i(\lambda_1 N_1 + \lambda_2 N_2)} T_x^{-1}\\
    &\quad = T_x \, e^{i(2\pi E_1+B_1-B_2)} \, T_x^{-1} \, e^{i(\lambda_1-\lambda_2)}\\
    &\quad = e^{i(2\pi E_2+B_3-B_4)} \, e^{i(\lambda_3-\lambda_4+(\delta\lambda)_{s_1})}.
\end{split}
\end{align}
Restricting to $\mathcal{H}_\chi$, it equals $e^{i(\lambda_2+\lambda_3-\lambda_4+(\delta\lambda)_{s_1})}=e^{i\lambda_1}$, which means that $e^{i(2\pi E_1+B_1-B_2)}$ is invariant when conjugated with $T_1$.
\end{proof}

In conclusion, we have proven that $T_1GT_1^{-1}|_{\mathcal{H}_\chi}=\chi$, which implies that $T_1$ is an physical operator on $\mathcal{H}_\chi$. Since it is obtained by modifying the translation on $\mathcal{H}$, we conclude that $T_1$ is the $x$ direction translation operator in $\mathcal{H}_\chi$. The proof for $T_2$ is similar.

Note that the translations in two directions do not commute. Motivated by the conclusions in Section \ref{sec:Moduli space and dynamical defects}, i.e. monodromy defect insertions, we calculate 
\begin{align}\label{eq:algebra for magetic translation}
    T_1T_2T_1^{-1}T_2^{-1}=\exp\left[-i\sum_s (\delta\lambda)_s \left(N+\frac{dA}{2\pi}\right)_{s+\frac{\hat{x}}{2}+\frac{\hat{y}}{2}}\right].
\end{align}
Since $\left(N+\frac{dA}{2\pi}\right)_p$ is the gauge invariant local monopole density operator, the magnetic translation algebra \eqref{eq:algebra for magetic translation} is a manifestation of the $e^{i(\delta\lambda)_s}$ monodromy at each site. 

Note that \eqref{eq:algebra for magetic translation} do not depend on the part of $(\lambda,\theta)$ which corresponds to the insertion of \emph{topological} defects, i.e those discussed in Section  \ref{sec:topological defect}, but only depend on the insertion of non-topological monodromy defects. This agrees with the expectation that inserting topological defects alone do not give raise to magnetic translations.

Also, for the Hamiltonian $H$ \eqref{eq:unmodified Hamiltonian} with $e\neq0$, the translation operators $T_i$ \eqref{eq:modified translation} do not commute with the Hamiltonian $H$ \eqref{eq:unmodified Hamiltonian} if $\delta\lambda\neq 0$, meaning that the (magnetic) translation symmetry is \emph{explicitly} broken. Note that the Gauss law with $\delta \lambda \neq 0$ is obtained by inserting monodromy defects. This agrees with the fact that the monodromy defect is not topological, which breaks the translation symmetry. 

\section{Duality to compact boson}\label{sec:duality}

In the continuum, 2+1d free Maxwell theory is dual to free compact boson at radius $R=\frac{e}{2\pi}$,
\begin{align}
    \frac{1}{2e^2}f\wedge\star f \quad \longleftrightarrow \quad \frac{e^2}{8\pi^2}d\phi\wedge\star d\phi.
\end{align}
In this section, we demonstrate this exact duality in the Hamiltonian lattice model, and compare the Gauss laws on both sides. 

The lattice duality was discussed in \cite{Gorantla:2021svj}.
Again we start with the square lattice with periodic boundary conditions, and put a pair of real-valued conjugate variables on each site and link,
\begin{align}
\begin{split}
[\phi_s,p_{s'}]=i\delta_{ss'}, \qquad 
    [\widetilde{\phi}_\ell, w_{\ell'}]=i\delta_{\ell\ell'}.
\end{split}
\end{align}
The $2\pi$ shift of $\phi_s$ is gauged by $w_\ell$, hence the following $\mathbb{Z}$ gauge transformation
\begin{align}
\begin{split}
        \phi \to \phi+2\pi n, \qquad 
    w \to w-dn
\end{split}
\end{align}
for $n\in C^0(\Lambda, \mathbb{Z})$. Except the $2\pi$ periodic\footnote{or the eigenvalues of $w_\ell$ being integers}, $\widetilde{\phi}$ has a more secret gauge redundancy by sifting $\delta \alpha$,
\begin{align}
    \widetilde{\phi}\to\widetilde{\phi}+2\pi\widetilde{n}+\delta\alpha,
\end{align}
for $\widetilde{n}\in C^1(\Lambda,\mathbb{Z})$ and $\alpha\in C^2(\Lambda, \mathbb{R})$. The Gauss law operator can be written down similar to \eqref{eq:U(1) Gauss law operator},
\begin{align}\label{eq:boson Gauss law operator}
    G=\exp\left[i\sum_s n_s\left(2\pi p_s+\delta \widetilde{\phi}_s\right)\right]\exp\left[i\sum_p \alpha_p(dw)_p+i\sum_\ell 2\pi \widetilde{n}_\ell w_\ell\right]
\end{align}
and we first do the unmodified constraints for simplicity,
\begin{align}
    e^{i(2\pi p+\delta \widetilde{\phi})}=1,\quad dw=0,\quad e^{2\pi iw}=1.
\end{align}

The Hamiltonian for the compact boson theory is
\begin{align}\label{eq:ham in compact}
    H=\sum_s \frac{1}{2R^2}p_s^2+\sum_\ell \frac{R^2}{2}(d\phi+2\pi w)_\ell^2.
\end{align}
The  charge operators for 0-from $\mathrm{U(1)}_\text{shift}^{(0)}$ and 1-from $\mathrm{U(1)}_\text{winding}^{(1)}$ symmetries are
\begin{align}
\begin{split}
    Q^{(0)}=\sum_s \left(p+\frac{\delta \widetilde{\phi}}{2\pi}\right)_s, \qquad 
    Q^{(1)}=\sum_\ell w_\ell.
\end{split}
\end{align}
The seasons for these charge operators being topological, and the local charges being integer-valued are similarly to the discussions in Section \ref{sec:symmetries}.

Now we introduce a new set of variables to describe this theory, defined by
\begin{align}
\begin{split}
    &A_\ell = \widetilde{\phi}_\ell,\quad E_\ell=w_\ell+\left(\frac{d\phi}{2\pi}\right)_\ell,\\
    &B_s = \phi_s, \quad  N_s=p_s+\left(\frac{\delta \widetilde{\phi}}{2\pi}\right)_s.
\end{split}
\end{align}
If we relabel variables by the dual lattice\footnote{ namely exchange sites with plaquettes, $d$ with $\delta$, etc..}, it can be checked that these variables satisfy the commutation relations \eqref{eq:abelian gauge theory commutator 1} \eqref{eq:abelian gauge theory commutator 2} in the abelian gauge theory, the Hamiltonian of the boson theory \eqref{eq:ham in compact} takes exactly the same form as \eqref{eq:unmodified Hamiltonian} with $e=2\pi R$, so are the 0-from and 1-from charge operators. In particular, we focus on the constraints
\begin{align}\label{eq:duality between constraints}
\begin{split}
    e^{i(2\pi p+\delta \widetilde{\phi})}=1 \quad &\longleftrightarrow \quad e^{2\pi iN}=1\\
    dw=0\quad &\longleftrightarrow \quad dE=0\\
    e^{2\pi iw}=1 \quad&\longleftrightarrow \quad e^{i(2\pi E+d B)}=1,
\end{split}
\end{align}
which recovers the unmodified constraints in the abelian gauge theory \eqref{eq:trandidtional Gauss law}.

We make two comments here. Firstly, the compact boson in the continuum is not a gauge theory in an obvious way, but the corresponding lattice model has constraints, that are matched with the lattice gauge theory. Secondly, in the compact boson presentation, the constraint $e^{2\pi iw}=1$ which fixes the eigenvalues of $w_\ell$ to be integers, gets mapped to the Gauss law in the $\mathrm{U(1)}$ gauge theory presentation, as shown in \eqref{eq:duality between constraints}. So the gauge transformation associated with the gauge group is not an unambitious notion, and we suggest just regard them as constraints in the Hamiltonian lattice model.

Of course, we can work out all possible modified Gauss laws for the Gauss law operator \eqref{eq:boson Gauss law operator} in the compact boson theory. The calculation and the result, i.e. the moduli space $\mathcal{M}$, are basically the same as Section \ref{sec:Moduli space of physical Hilbert spaces}, matching with the $\mathrm{U(1)}$ gauge theory presentation.

\section{Discussions and future directions}\label{sec:Discussions and future directions}
One generalization is that we couple the $\mathrm{U(1)}$ gauge field to matter degrees of freedom, making it an interacting theory. A subtlety in the lattice is that if the eigenvalues of the local charge operator of the matter is finite, such as $\rho_s=\sigma^z_s$ for a qubit whose eigenvalues are $\pm 1$, the Gauss law with some choice of $\chi$ could lead to non-extensive or empty physical Hilbert space.\footnote{This subtlety has already been noticed in Kogut and Susskind \cite{Kogut:1979wt,Kogut:1974ag} and recently in \cite{Chatterjee:2024gje}.} Hence the physical interpretation of the Gauss law is more subtle.

If we couple to a complex scalar field where the local Hilbert space is infinite dimensional, for example considering the abelian-Higgs model, than such problem does not exist and we can also think about the modified Gauss laws. Then part of the Gauss law operator changes
\begin{align}
    \exp\left(i\sum_s \alpha_s(\delta E)_s\right) \to  \exp\left(i\sum_s \alpha_s(\delta E+\rho)_s\right).
\end{align}
Some consistency conditions used in Section \ref{sec:Moduli space of physical Hilbert spaces} no longer hold, hence the moduli space we discussed does not directly apply to the abelian-Higgs model. Despite this, we can still discuss the interpretation of the modification \eqref{eq:local Gauss law} or \eqref{eq:modified gauge theory Hamiltonian}.

Note that introducing the scalar does not affect the third modified Gauss law in \eqref{eq:local Gauss law}, i.e. $\exp(2\pi iN-i\theta)=1$. In the abelian-Higgs model this cannot correspond to the insertion of $\mathrm{U(1)}_e^{(1)}$ line defect since the $\mathrm{U(1)}_e^{(1)}$ symmetry is explicitly broken by the charge 1 complex scalar, but the insertion of vortex loop defect:
\begin{align}
    \lim_{r\to0}\exp\left(i\int_{C_r}a\right)=e^{i\theta}.
\end{align}
If we shift to the modified Hamiltonian presentation in Tabel \ref{tab:2 presentations}, this has been discussed in \cite{Komargodski:2025jbu}. Similar phenomenon also occurs in finite lattice gauge theories, see Appendix \ref{app:finite}.\footnote{In the finite lattice gauge theory, adding matter degrees of freedom can be achieved by only modifying the Hamiltonian, without introducing degrees of freedom on sites. See Appendix \ref{app:finite} for details. } 

On the other hand, the magnetic $\mathrm{U(1)}^{(0)}_m$ symmetry still exists. So we claim that the modification of $\lambda$ still corresponds to the insertion of monodromy defect in the interacting theory.

Other future directions include analyzing the modified Gauss laws for nonabelian gauge theories on the lattice \cite{Chen:2024ddr}, possible realization of the anomalous translation \cite{Seiberg:2024wgj,Seiberg:2024yig} on the lattice, and modifying the non-invertible Gauss law \cite{Choi:2022fgx}.

\section*{Acknowledgments}
We are grateful to Yichul Choi, Shu-Heng Shao for helpful discussions, and  Sakura Schafer-Nameki, Sahand Seifnashri for comments on a draft. The work of Y.Z.
is supported by NSFC grant No.12505093 and the starting funds from University of Chinese Academy of Sciences
(UCAS) and from the Kavli Institute for Theoretical Sciences (KITS).

\appendix
\section{Conventions on (co)chains on square lattice}\label{sec:convensions}
We consider 2d square lattice $\Lambda$ with periodic boundary conditions in both direction, where $s\in C_0(\Lambda,\mathbb{Z})$ denotes sites, $\ell\in C_1(\Lambda,\mathbb{Z})$ denotes links and $p\in C_1(\Lambda,\mathbb{Z})$ denotes plaquettes. The unit vectors in positive $x$ and $y$ directions are denoted as $\hat{x}$ and $\hat{y}$. The links are oriented in the positive $x$ or $y$ direction. The direction reversed for $\ell$ is denoted as $-\ell$.

\begin{figure}
\centering
\begin{tikzpicture}[
    scale=1,
    line join=round,
    line cap=round,
    every node/.style={font=\footnotesize}
]
\begin{scope}
    \draw (0,1)--(4,1);
    \draw (0,2)--(4,2);
    \draw (0,3)--(4,3);
    \draw (1,0)--(1,4);
    \draw (2,0)--(2,4);
    \draw (3,0)--(3,4);
    \draw[blue, decoration = {markings, mark=at position 0.6 with {\arrow[scale=1.0]{stealth}}}, postaction=decorate] (1,3) -- (2,3);
    \node[circle, fill=red, inner sep=0.6pt] at (1,3) {};
    \node[circle, fill=red, inner sep=0.6pt] at (2,3) {};
    \draw[fill=blue!20,opacity=0.5] (2,1) -- (3,1) -- (3,2) -- (2,2) -- cycle;
    \draw[red, decoration = {markings, mark=at position 0.6 with {\arrow[scale=1.0]{stealth}}}, postaction=decorate] (2,1) -- (3,1);
    \draw[red, decoration = {markings, mark=at position 0.6 with {\arrow[scale=1.0]{stealth}}}, postaction=decorate] (2,2) -- (3,2);
    \draw[red, decoration = {markings, mark=at position 0.6 with {\arrow[scale=1.0]{stealth}}}, postaction=decorate] (2,1) -- (2,2);
    \draw[red, decoration = {markings, mark=at position 0.6 with {\arrow[scale=1.0]{stealth}}}, postaction=decorate] (3,1) -- (3,2);
    \node[blue] at (1.5,3.3) {\footnotesize{$\ell$}};
    \node[red] at (0.8,2.8) {\footnotesize{$s_1$}};
    \node[red] at (2.2,2.8) {\footnotesize{$s_2$}};
    \node[blue] at (2.5,1.5) {\footnotesize{$p$}};
    \node[red] at (2.5,0.7) {\footnotesize{$\ell_1$}};
    \node[red] at (2.5,2.3) {\footnotesize{$\ell_3$}};
    \node[red] at (1.7,1.5) {\footnotesize{$\ell_4$}};
    \node[red] at (3.3,1.5) {\footnotesize{$\ell_2$}};
\end{scope}
\begin{scope} [xshift=8cm]
    \draw (0,1)--(4,1);
    \draw (0,2)--(4,2);
    \draw (0,3)--(4,3);
    \draw (1,0)--(1,4);
    \draw (2,0)--(2,4);
    \draw (3,0)--(3,4);
    \draw[red, decoration = {markings, mark=at position 0.6 with {\arrow[scale=1.0]{stealth}}}, postaction=decorate] (0,2) -- (1,2);
    \draw[red, decoration = {markings, mark=at position 0.6 with {\arrow[scale=1.0]{stealth}}}, postaction=decorate] (1,2) -- (1,3);
    \draw[red, decoration = {markings, mark=at position 0.6 with {\arrow[scale=1.0]{stealth}}}, postaction=decorate] (1,2) -- (2,2);
    \draw[red, decoration = {markings, mark=at position 0.6 with {\arrow[scale=1.0]{stealth}}}, postaction=decorate] (1,1) -- (1,2);
    \node[circle, fill=blue, inner sep=0.6pt] at (1,2) {};
    \draw[fill=red!20,opacity=0.5] (2,1) -- (3,1) -- (3,2) -- (2,2) -- cycle;
    \draw[fill=red!20,opacity=0.5] (3,1) -- (4,1) -- (4,2) -- (3,2) -- cycle;
    \draw[blue, decoration = {markings, mark=at position 0.6 with {\arrow[scale=1.0]{stealth}}}, postaction=decorate] (3,1) -- (3,2);
    \node[red] at (2.5,1.5) {\footnotesize{$p_1$}};
    \node[red] at (3.5,1.5) {\footnotesize{$p_2$}};
    \node[blue] at (0.8,1.8) {\footnotesize{$s$}};
    \node[blue] at (2.9,0.8) {\footnotesize{$\ell$}};
    \node[red] at (0.2,2.2) {\footnotesize{$\ell_1$}};
    \node[red] at (0.8,1.2) {\footnotesize{$\ell_2$}};
    \node[red] at (1.8,1.8) {\footnotesize{$\ell_3$}};
    \node[red] at (1.2,2.8) {\footnotesize{$\ell_4$}};
\end{scope}
\end{tikzpicture}
    \caption{Left: the definition of operators $\partial$ and $d$. Right: the definition of operators $\hat{\delta}$ and $\delta$.}
    \label{fig:definition of bdy and its adjoint}
\end{figure}

The boundary operator is defined as $\partial: C_k(\Lambda,\mathbb{Z})\to C_{k-1}(\Lambda,\mathbb{Z})$. In particular,
\begin{align}
    \partial \ell=s_2-s_1,\quad \text{and}\quad \partial p=\ell_1+\ell_2-\ell_3-\ell_4,
\end{align}
in Figure \ref{fig:definition of bdy and its adjoint}. 

Let $A$ be an abelian group, like $\mathbb{R}$ or $\mathbb{Z}$. The coboundary operator is defined as $d: C^k(\Lambda,A)\to C^{k+1}(\Lambda, A)$, by 
\begin{equation}
    (da)_{c}=a_{\partial c}.
\end{equation}
In particular 
\begin{align}
    (d\phi)_\ell=\phi_{s_2}-\phi_{s_1},\quad \text{and}\quad (da)_p=a_{l_1}+a_{l_2}-a_{l_3}-a_{l_4},
\end{align}
for $\phi\in C^0(\Lambda,A)$ and $a\in C^1(\Lambda,A)$ in Figure \ref{fig:definition of bdy and its adjoint}. 

We also use the operator $\hat{\delta}: C_k(\Lambda,\mathbb{Z})\to C_{k+1}(\Lambda,\mathbb{Z})$ defined by
\begin{align}
    \hat{\delta} c =\sum_{f\,\text{such that }c\subset \partial f}f.
\end{align}
As in Figure \ref{fig:definition of bdy and its adjoint},
\begin{align}
    \hat{\delta}s=\ell_1+\ell_2-\ell_3-\ell_4,\quad \text{and},\quad\hat{\delta}\ell=p_1-p_2.
\end{align}

The corresponding operator acting on cochains $\delta: C^k(\Lambda,A)\to C^{k-1}(\Lambda,A)$, is defined by 
\begin{align}\label{eq:geometric  def for delta}
    (\delta a)_{c}=a_{\hat{\delta}c}.
\end{align}
As in Figure \ref{fig:definition of bdy and its adjoint},
\begin{align}
    (\delta a)_s=a_{\ell_1}+a_{\ell_2}-a_{\ell_3}-a_{\ell_4},\quad\text{and}\quad (\delta b)_\ell=b_{p_1}-b_{p_2}
\end{align}

Actually, $\delta$ has an equivalent definition. View $C^k(\Lambda,\mathbb{R})$ as a real vector space, with the inner product 
\begin{align}\label{eq:inner product of cochain}
    \langle a,b\rangle=\sum_{c\in C_k(\Lambda)} a_cb_c. 
\end{align}
Since $d$ is a linear map, we define $\delta$ as the adjoint of $d$ with respect to such inner product, i.e.
\begin{align}\label{eq:adjoint definition for delta}
    \langle a,db\rangle=\langle\delta a,b\rangle.
\end{align}
It can be shown by direct calculation that this definition is equivalent to the geometric definition \eqref{eq:geometric  def for delta}. \eqref{eq:adjoint definition for delta} can be thought of as the lattice version of integral by part and is used in the paper.

We also use the following notation for refer to neighboring links or plaquettes in Section \ref{sec:magnetic translation}.

\begin{align}
\begin{array}{@{}c@{\hspace{2.5cm}}c@{}}
    \begin{tikzpicture}[inline tikz]
        \draw[thick, mid arrow] (0,0) -- (1.5,0);
        \fill[red] (1.5,0) circle (1.5pt);
        \node[below] at (0.75,0) {$\ell$};
        \node[above right, red, inner sep=2pt] at (1.5,0)
            {$\ell + \frac{\hat{x}}{2}$};
    \end{tikzpicture}
    &
    \begin{tikzpicture}[inline tikz]
        \draw[thick] (1,0) -- (0,0) -- (0,1) -- (1,1);
        \draw[thick, red, mid arrow] (1,0) -- (1,1);
        \node at (0.5,0.5) {$p$};
        \node[right, red] at (1,0.5)
            {$p + \frac{\hat{x}}{2}$};
    \end{tikzpicture}
    \\[1cm]
    \begin{tikzpicture}[inline tikz]
        \draw[thick, mid arrow] (0,-0.75) -- (0,0.75);
        \fill[red] (0,0.75) circle (1.5pt);
        \node[right] at (0,0) {$\ell$};
        \node[above right, red, inner sep=2pt] at (0,0.75)
            {$\ell + \frac{\hat{y}}{2}$};
    \end{tikzpicture}
    &
    \begin{tikzpicture}[inline tikz]
        \draw[thick] (0,1) -- (0,0) -- (1,0) -- (1,1);
        \draw[thick, red, mid arrow] (0,1) -- (1,1);
        \node at (0.5,0.5) {$p$};
        \node[above, red] at (0.5,1)
            {$p + \frac{\hat{y}}{2}$};
    \end{tikzpicture}
    \\[1cm]
    \multicolumn{2}{c}{
    \begin{tikzpicture}[inline tikz]
        \draw[thick] (0,1) -- (0,0) -- (1,0) -- (1,1) -- (0,1);
        \node[red] at (0.5,0.5) {$p$};
        \node[right, red, xshift=4pt, inner sep=2pt] at (1,0.5)
            {$p=s+\frac{\hat{x}}{2}+\frac{\hat{y}}{2}$};
        \node at (-0.2,-0.2) {$s$};
    \end{tikzpicture}
    }
\end{array}
\end{align}

\section{Proof of \eqref{eq:existing lambda for chi}: constructing $\lambda$ from $\chi(n,\alpha,0)$}\label{sec:proof of lambda for chi}

We prove \eqref{eq:existing lambda for chi} in the appendix, i.e. there exist $\lambda\in C^1(\Lambda,\mathbb{R})$ such that 
\begin{align}
    q=\frac{\delta \lambda}{2\pi},\quad \text{and} \quad \varphi=\lambda\mod 2\pi.
\end{align}

To start, we need to prove a lemma which is used in the following.
\begin{lemma}\label{lemma}
Let $\Lambda$ be a square lattice with periodic boundary conditions in both directions. For any $m\in C^0(\Lambda,\mathbb{Z})$ satisfying $\sum_s m_s=0$, there exists $j\in C^1(\Lambda,\mathbb{Z})$, such that $m=\delta j$.
\end{lemma}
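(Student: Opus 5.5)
The plan is to build $j$ explicitly by moving charge along paths. Equivalently, we show that the image of $\delta: C^1(\Lambda,\mathbb{Z})\to C^0(\Lambda,\mathbb{Z})$ is exactly the set of integer $0$-cochains with vanishing total sum.

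We begin with the elementary building block. For a single link $\ell$ from $s_1$ to $s_2$, the geometric definition \eqref{eq:geometric  def for delta} and the form of $\hat{\delta}s$ give $\delta 1_\ell = \pm(1_{s_2}-1_{s_1})$. The overall sign depends only on the convention that $\hat{\delta}s=\ell_1+\ell_2-\ell_3-\ell_4$. Either sign is fine, because we may replace $1_\ell$ by $-1_\ell$. By linearity, for any lattice path $P$ from a site $a$ to a site $b$, the signed sum $j_P=\sum_{\ell\in P}\pm 1_\ell$ (with signs fixed by traversal direction and the convention) telescopes to
\begin{align*}
\delta j_P = 1_b - 1_a
\end{align*}
up to a global sign, which we again absorb. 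The key fact used is connectivity: the square lattice on the torus is connected, so any two sites are joined by such a path.

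Next we fix a base site $s_0$ and, for each site $s$, a path $P_s$ from $s_0$ to $s$. We then define
\begin{align*}
j=\sum_s m_s\, j_{P_s}\in C^1(\Lambda,\mathbb{Z}).
\end{align*}
This gives
\begin{align*}
\delta j=\sum_s m_s(1_s-1_{s_0}) = m - \Big(\sum_s m_s\Big)1_{s_0} = m,
\end{align*}
where the last equality uses the hypothesis $\sum_s m_s=0$. Integrality is automatic because all the coefficients are integers. The finite lattice ensures every sum is finite.

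The only step needing care is the sign bookkeeping in $\delta 1_\ell$. We need to check that $\delta 1_\ell$ is supported exactly on the two endpoints of $\ell$, with opposite signs $\pm1$. This follows by reading off from \eqref{eq:geometric  def for delta} which $\hat{\delta}s$ contain $\ell$. The link $\ell$ appears in $\hat{\delta}s$ only when $s\in\partial\ell$, and it appears with opposite signs at its head and tail. One could alternatively use the adjoint characterization \eqref{eq:adjoint definition for delta}: $\langle\delta 1_\ell,\phi\rangle=(d\phi)_\ell=\phi_{s_2}-\phi_{s_1}$. This yields $\delta 1_\ell=1_{s_2}-1_{s_1}$ directly and removes the sign ambiguity entirely. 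I would use this route.

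Periodic boundary conditions play no special role beyond guaranteeing a connected finite lattice. The necessity of $\sum_s m_s=0$ is consistent with this, since $\sum_s(\delta j)_s=\langle \delta j,1\rangle=\langle j,d1\rangle=0$.
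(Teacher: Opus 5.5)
Your proposal is correct and is essentially the paper's proof: the paper likewise fixes paths $\gamma_s$ from the origin to each site, sets $j=\sum_{s\neq(0,0)} m_s 1_{\gamma_s}$, and uses $\delta 1_{\gamma_s}=1_s-1_{(0,0)}$ together with $\sum_s m_s=0$ to get $\delta j=m$. Your use of the adjoint characterization to fix $\delta 1_\ell=1_{s_2}-1_{s_1}$ is a cleaner way to confirm the telescoping identity, which the paper states without comment.
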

\begin{proof}
We provide an explicit construction of $j$ from $m$.

For every site $s\neq (0,0)$, we pick a nearest-neighbor path $\gamma_s$ from the origin $(0,0)$ to $s$. For example, if $s=(a,b)$, take the path that first move horizontally from $(0,0)$ to $(a,0)$, than vertically from $(a,0)$ to $(a,b)$.

Define $j$ by putting $m_s$ along the path $\gamma_s$, for every $s\neq (0,0)$,
\begin{align}\label{eq:j in the lemma 1}
    j=\sum_{s\neq (0,0)} m_s1_{\gamma_s}.
\end{align}
Here $1_{\gamma_s}$ is an integer 1-cochain which is $+1$ on the links of $\gamma_{s}$ with positive orientation, $-1$ on edges taken the negative orientation, and $0$ elsewhere.

Note that 
\begin{align}
    \delta 1_{\gamma_s}=1_s-1_{(0,0)},
\end{align}
where $1_s$ is an integer 0-cochain which is $1$ on site $s$ and $0$ elsewhere. So
\begin{align}
\begin{split}
    \delta j&=\sum_{s\neq (0,0)} m_s \left(1_s-1_{(0,0)}\right)\\
    &=\sum_{s\neq (0,0)} m_s 1_s-\left(\sum_{s\neq (0,0)}m_s\right) 1_{(0,0)}\\
    &=\sum_{s\neq (0,0)} m_s 1_s+m_{(0,0)}1_{(0,0)}\\
    &=\sum_{s} m_s 1_s \equiv m,
\end{split}
\end{align}
where the third equality follows from $\sum_s m_s=0$. Thus, $j$ defined in \eqref{eq:j in the lemma 1} is the one stated in the Lemma \ref{lemma}.

\end{proof}

Now we can prove \eqref{eq:existing lambda for chi}. As mentioned in the remarks below \eqref{eq:U(1) physical Hilbert space}, $\chi(dk,-2\pi k,0)=G(dk,-2\pi k,0)=1$, for $k\in C^0(\Lambda,\mathbb{Z})$. The second equality follows from the definition of Gauss law operator $G$ \eqref{eq:U(1) Gauss law operator}. Thus, 
\begin{align}
\begin{split}
    1&=\chi(dk,-2\pi k,0)=\chi(dk,0,0)\chi(0,-2\pi k,0)\\
    &=\exp\left(i\sum_\ell (dk)_\ell \varphi_\ell-2\pi i\sum_s k_s q_s \right)\\
    &=\exp\left(i\sum_s k_s(\delta\varphi-2\pi q)_s\right),
\end{split}
\end{align}
where the second equality follows from $\chi$ being a homomorphism, and the last equality follows from the property \eqref{eq:adjoint definition for delta}. Since this holds for any $k_s\in\mathbb{Z}$, 
\begin{align}\label{eq:relation varphi and q}
    (\delta \varphi)_s=2\pi q_s \mod 2\pi.
\end{align}

Recall that by definition $\varphi \in C^1(\Lambda,\mathbb{R}/2\pi\mathbb{Z})$, we pick an arbitrary lift from $\mathbb{R}/2\pi\mathbb{Z}$ to $\mathbb{R}$, i.e. $\widetilde{\varphi}\in C^1(\Lambda,\mathbb{R})$,
\begin{align}
    \widetilde{\varphi}_\ell=\varphi_\ell \mod 2\pi.
\end{align}
Then, by \eqref{eq:relation varphi and q} there exist integer $m_s\in\mathbb{Z}$ such that 
\begin{align}\label{eq:relation m, q, and varphi tilde}
    \delta\widetilde{\varphi}-2\pi q=2\pi m.
\end{align}
Note that this equality holds on $\mathbb{R}$ since $q_s\in\mathbb{R}$.

Consider a configuration of $\alpha\in C^0(\Lambda,\mathbb{R})$ such the $\alpha_s$ takes a constant value on every site, $\alpha_s=x$. Since $\sum_s(\delta E)_s=0$ on closed lattice (periodic boundary condition),
\begin{align}
    \exp\left(ix\sum_sq_s\right)=\chi(0,\alpha,0)=G(0,\alpha,0)=\exp\left(ix\sum_s (\delta E)_s\right)=1,
\end{align}
imply that $\sum_s q_s=0$. Thus, from \eqref{eq:relation m, q, and varphi tilde} $m\in C^0(\Lambda,\mathbb{Z})$ satisfies
\begin{align}
    \sum_sm_s=0
\end{align}

From Lemma \ref{lemma}, there exists $j\in C^1(\Lambda,\mathbb{Z})$ such that $m=\delta j$, as constructed in the proof. Thus, from \eqref{eq:relation m, q, and varphi tilde},
\begin{align}
    2\pi q=\delta(\widetilde{\varphi}-2\pi j).
\end{align}
We define
\begin{align}
    \lambda=\widetilde{\varphi}-2\pi j\in C^1(\Lambda,\mathbb{R}),
\end{align}
hence \eqref{eq:existing lambda for chi} is proven.

\section{Compare with lattice finite gauge theories}
\label{app:finite}

In this appendix, we revisit the Gauss laws in $\mathbb{Z}_2$ lattice models in 2+1d lattice gauge theories, following \cite{Seiberg:2024gek,Choi:2024rjm}, and compare them with the Gauss laws in the $\mathrm{U(1)}$ lattice gauge theory.

We consider a square lattice, with spin-$\frac12$ gauge field degrees of freedom, and the local Hilbert space is $\mathcal{H}_\ell = \mathbb{C}^2$, acted upon by $\sigma^a_\ell$, $a=x,y,z$ and $\ell$ labels the link. Similar to the $\mathrm{U(1)}$ examples in the main text, we don't include matter degrees of freedom on sites.

The modified Gauss law is 
\begin{equation}\label{eq:Z2Gauss}
    \prod_{s\in \partial \ell} \sigma^x_{\ell} = (-1)^{\alpha_s}.
\end{equation}
A Hamiltonian commuting with \eqref{eq:Z2Gauss} is 
\begin{equation}\label{eq:Z2Ham}
    H=  g_1  \sum_{p} \prod_{\ell\in \partial p} \sigma^z_\ell + g_2 \sum_\ell \sigma^x_\ell.
\end{equation}
This is the $\bZ_2$ lattice gauge theory studied in \cite{Fradkin:1978dv,Verresen:2022mcr,Dumitrescu:2026vre,Ji:2026yfj}.\footnote{These papers impose the Gauss law constraint energetically  and hence allows an additional interaction $\sum_{\ell} \sigma^z_\ell$. In our work, we always enforce the Gauss law strictly.  }

The Hamiltonian \eqref{eq:Z2Ham}, for any value of $g_1, g_2$, has a $\bZ_2^m$ 1-form symmetry, generated by the unitary line operator 
\begin{equation}\label{eq:Z2sym}
    U(L^*)= \prod_{\ell\in L^*} \sigma^x_{\ell},
\end{equation}
where $L^*$ is a closed loop in the dual lattice. When $g_2=0$, the Hamiltonian \eqref{eq:Z2Ham} has an additional $\bZ_2^e$ 1-form symmetry, generated by 
\begin{equation}
    V(L)= \prod_{\ell \in L} \sigma^z_\ell
\end{equation}
where $L$ is a closed loop in the square lattice.

The modified Gauss law \eqref{eq:Z2Gauss} can be mapped to the ordinary Gauss law by conjugating the truncated $V$ operator 
\begin{equation}\label{eq:Vbdy}
    V(L,s) = \prod_{\ell \in L, s\in \partial L} \sigma^z_\ell.
\end{equation}
When $g_2=0$, since $V$ is a $\bZ_2^e$ symmetry generator, conjugation by \eqref{eq:Vbdy} amounts to inserting a $\bZ_2^e$ 1-form symmetry defect. Hence the modified Gauss law can be obtained from the ordinary Gauss law by inserting a $\bZ_2^e$ 1-form symmetry defect.

When $g_2\neq 0$, since $V$ is no longer a symmetry, conjugating \eqref{eq:Vbdy} with \eqref{eq:Z2Ham} flips the sign of $\sigma^x_\ell$ along $L$, hence the above interpretation fails. 
To see the physical meaning of the modified Gauss law, let's consider the continuum limit. The Hamiltonian \eqref{eq:Z2Ham} flows to a real scalar $\phi$ coupling to a dynamical $\bZ_2$ gauge field $a$. The Lagrangian is 
\begin{equation}
    (D_{\pi a}\phi)^2 + m^2 \phi^2 + u \phi^4 
\end{equation}
The $\sigma^x_\ell$ on each link is regarded as a Wilson line segment $(-1)^{a_\ell}$ of the $\bZ_2$ gauge field. The ordinary Gauss law $\prod_{s\in \partial \ell} \sigma^x_{\ell} =1$ ensures that the gauge field is flat $(-1)^{da}=1$. Suppose the Gauss law is modified at one site $s$. It means that the flatness of the $\bZ_2$ gauge field is violated at one plaquette $p=s^*$ in the dual lattice. Around that plaquette, the $\bZ_2$ gauge field has $\bZ_2$-flux, $(da)_{s^*}=1$, or equivalently, the $\bZ_2$ Wilson loop has a non-trivial phase 
\begin{equation}
    \exp\left( i \pi \oint_{\gamma} a\right) = -1, \qquad \text{link}(\gamma,s)=1 
\end{equation}
This defect is the $\bZ_2$-vortex, as studied in \cite{Komargodski:2025jbu}. In summary, we find that modification of the Gauss law \eqref{eq:Z2Gauss} is equivalent to inserting $\bZ_2$-vortices.

\bibliographystyle{JHEP}
\bibliography{ref}

\end{document}